\newacronym{pwa}{PWA}{piece-wise affine}
\newacronym{lp}{LP}{Linear Programming}
\newacronym{nndm}{NNDM}{Neural Network Dynamical Model}
\newacronym{sbf}{SBF}{Stochastic Barrier Function}
\newacronym{mbr}{MBR}{Minimum Bounding Rectangle}
\newacronym{saa}{SAA}{Sample Average Approximation}
\newacronym{ltl}{LTL}{Linear Temporal Logic}
\newacronym{pac}{PAC}{Probably Approximately Correct}
\newacronym{iid}{iid.}{independent and identically distributed}
\newacronym{sos}{SoS}{sum-of-squares}
\newacronym{lbp}{LBP}{Linear Bound Propagation}
\newtheorem{defi}{Definition}
\newtheorem{assump}{Assumption}
\newtheorem{theorem}{Theorem}
\newtheorem{corollary}{Corollary}
\newtheorem{remark}{Remark}
\newcommand{\uncertaintyspace}{\ensuremath{\Omega}}
\newcommand{\uncertaintyelement}{\ensuremath{\omega}}
\newcommand{\sigmaalgebra}{\ensuremath{\mathcal{F}}}
\newcommand{\probmeas}{\ensuremath{\mathbb{P}}}
\newcommand{\noise}{\ensuremath{\eta}}
\newcommand{\sampleset}{{\ensuremath D}}
\newcommand{\barrierl}[0]{{\ensuremath u}}
\newcommand{\barrierc}[0]{{\ensuremath v}}
\newcommand{\cmartingale}[0]{{\ensuremath c}}
\newcommand{\buffervar}[0]{{\ensuremath \nu}}
\newcommand{\region}[0]{{\ensuremath P}}
\newcommand{\regionset}[0]{{\ensuremath \mathcal{P}}}
\newcommand{\numberregions}{\ensuremath{\ell}}
\newcommand{\safeset}[0]{{\ensuremath \mathcal{S}}}
\newcommand{\unsafeset}[0]{{\ensuremath \mathcal{U}}}
\newcommand{\initialset}[0]{{\ensuremath X_0}}
\DeclareMathOperator*{\minimise}{\min}
\DeclareMathOperator{\epi}{epi}
\DeclareMathOperator{\subjectto}{s.t.}
\newcommand{\dualvariable}{{\ensuremath \lambda}}
\newcommand{\polyhedron}{\ensuremath{P}}
\newtheorem{prop}{Proposition}
\title{Inner Approximations of Stochastic Programs for Data-driven Stochastic Barrier Function Design}
\author[1*$\dag$]{Frederik Baymler Mathiesen}
\author[2$\dag$]{Licio Romao}
\author[3]{Simeon C. Calvert}
\author[2]{Alessandro Abate}
\author[1]{Luca Laurenti}
\affil[1]{Delft Center for Systems and Control, TU Delft.}
\affil[2]{Department of Computer Science, University of Oxford.}
\affil[3]{Department of Transport \& Planning, TU Delft.}
\affil[*]{Corresponding author. Email: \texttt{frederik@baymler.com}.}
\affil[$\dag$]{These authors contributed equally to this work.}
\begin{document}

\date{}

\maketitle

\begin{abstract}

This paper proposes a new framework to compute finite-horizon safety guarantees for discrete-time piece-wise affine systems with stochastic noise of unknown distributions.
The approach is based on a novel approach to synthesise a stochastic barrier function (SBF) from noisy data and rely on the scenario optimization theory. In particular, we show that the stochastic program to synthesize a SBF can be relaxed into a chance-constrained optimisation problem on which scenario approach theory applies. We further show that the resulting program can be reduced to a linear programming problem, thus guaranteeing efficiency. In contrast to existing approaches, this method is data efficient as it only requires the number of data to be proportional to the logarithm in the negative inverse of the confidence level and is computationally efficient due to its reduction to linear programming.
The efficacy of the method is empirically evaluated on various verification benchmarks.
Experiments show a significant improvement with respect to state-of-the-art, obtaining tighter certificates with a confidence that is several orders of magnitude higher.

\end{abstract}

\section{Introduction}
\label{sec:introduction}

The behavior of modern autonomous systems are often uncertain, due to e.g., sensor noise or unknown dynamics, and are commonly employed in safety-critical applications, such as automated driving \cite{Shalev-Shwartz2017} or robotics \cite{Livingston2012}. These applications require formal guarantees of safety in order for the system to be deployed in real-life. Consequently, computing such guarantees for stochastic systems represents an important, but non-trivial, area of research \cite{APLS:08}. 
Existing approaches to address this problem either rely on abstractions, where the original system is \emph{abstracted} into a finite state model, generally a variant of a Markov chain \cite{Cauchi2019}, or leverage the concept of \glspl{sbf} \cite{PJP:07}. \glspl{sbf} are Lyapunov-like functions that can be employed to bound the probability that a dynamical system will remain safe for a given time horizon, without the need to explicitly evolve the system over time.  
A common assumption for the vast majority of the existing approaches is that the distribution of the system is known, and often either Gaussian or of bounded support \cite{PJP:07, Santoyo2021}.
Unfortunately, in practice, the noise characteristics of the system are generally not known \cite{Gracia2022,rahimian2019distributionally}. This leads to the main question of this paper: \emph{how can we compute formal certificates of safety for stochastic systems with unknown noise distribution?}

This paper focuses on guaranteeing safety for stochastic \gls{pwa} systems. 
In particular, a data-driven framework for the design of \glspl{sbf} for stochastic \gls{pwa} systems with unknown noise distribution is presented. By relying on tools from probability theory and convex optimisation, we show that the problem of synthesising a \gls{sbf} for this class of systems can be reformulated as a chance-constrained optimisation problem \cite{Shapiro2021-zc}. This reformulation allows employing the scenario approach theory to devise a data-driven framework to synthesize \glspl{sbf} with high confidence. The resulting approach is data-efficient, as it only requires the amount of data to be logarithmic in the negative inverse of the confidence, and is scalable, as, in the case of \gls{pwa} \glspl{sbf}, it reduces to the solution of a \gls{lp} problem.
We experimentally evaluate the performance of the method on various systems including a model of a vehicle in windy conditions. Our analysis illustrates how our approach outperforms state-of-the-art comparable methods both in terms of tightness of bounds and amount of data required to achieve the desired confidence. 
In summary, the main contributions  are:
\begin{itemize}
    \item A data-driven method based on the scenario approach to design \acrlong{pwa} \acrlongpl{sbf}.
    \item A novel inner chance-constrained approximation to stochastic programming.
    \item Empirical studies that illustrates the performance of the proposed method compared to state-of-the-art in terms of both certified safety probability and confidence.
\end{itemize}

The structure of the paper is as follows: Section \ref{sec:prelim} reviews convex and scenario optimisation, which are used extensively throughout the paper. Section \ref{sec:problem_statement} describes the safety certification problem and Section \ref{sec:stochastic_barrier_functions} how \glspl{sbf} formally can guarantee safety. In Section \ref{sec:data_driven_sbf_design} are the main results of this paper; namely the inner approximation to stochastic programming and data-driven \gls{sbf} design. Empirical studies are reported in Section \ref{sec:experiments}.

\paragraph{Related works}
\Glspl{sbf} were first proposed in \cite{kushner1967stochastic} to study the probability that a stochastic system exits a given set in a finite time using super-martingale theory. Since then, various works have employed \glspl{sbf} to study non-linear stochastic systems with approaches including \gls{sos} optimisation  \cite{PJP:07,Santoyo2021,pushpak2018,Salamati2023,abate2023quantitative} and relaxations to convex programming \cite{mazouzsafety2022, Mathiesen2013}.
However, all these methods assume that the model of the system is fully known. A recent set of works have started  to study data-driven approaches to design \glspl{sbf} for stochastic systems with (partially) unknown dynamics, which can be employed to obtain guarantees of safety with a confidence \cite{SALAMATI20217, Salamati2023}. These approaches replace the stochastic program for synthesising \glspl{sbf} with a \gls{saa}-based program, meaning that the expectation is replaced by the sample average with a probabilistic guarantee of satisfaction of the original expectation constraint through concentration inequalities. However, these methods require an amount of data that is proportional to the negative inverse of the confidence.
In contrast, our approach requires a number of data that is logarithmic in the negative inverse of the confidence.

Data-driven verification of stochastic systems is a relatively new area to address the problem of verifying (partially) unknown systems \cite{cAP11,Badings_2022_1,Badings2022, Abolfazl2023, SALAMATI20217, Salamati2023}.
To compute formal guarantees for non-linear systems, apart from the SAA approach described in the previous paragraph, existing literature focuses either on the scenario approach \cite{Badings_2022_1,Badings2022, Abolfazl2023}, on Gaussian processes \cite{Jackson2021, HASHIMOTO2022110646}, or on distributionally robust approaches \cite{Gracia2022}.
In particular, in \cite{Badings_2022_1,Badings2022,Abolfazl2023} the authors rely on the data efficiency of scenario approach theory to build abstractions of the original system with high confidence of correctness, while in \cite{Jackson2021,HASHIMOTO2022110646} error bounds on performing Gaussian Process regression are employed to again build  abstractions that are employed to perform probabilistic model checking of the unknown system. However, all these methods are abstraction-based. Consequently, they suffer from the scalability issues inherent with abstraction-based frameworks. In this paper, our approach will combine the data-efficiency of the scenario approach with the flexibility of \glspl{sbf}.

\subsection{Notation} The set of real, non-negative real, and natural numbers are denoted with $\mathbb{R}$, $\mathbb{R}_{\geq 0}$, and $\mathbb{N}$ respectively. Vectors in the Euclidean space will be denoted by the letter $x \in \mathbb{R}^n$ and random variables in $\mathbb{R}^n$ will be denoted with bold font $\mathbf{x}$. Subscripts will be used to denote a collection of elements, i.e., $x_1, \ldots, x_m$ denote different vectors in the same space. A subset $X$ of $\mathbb{R}^n$ is convex if $\lambda x_1 + (1-\lambda) x_2 \in X$, for all $x_1, x_2 \in X$ and $\lambda \in [0,1]$. A polyhedron $\polyhedron \subseteq \mathbb{R}^n$ is a convex set defined as $\polyhedron = \{x \in \mathbb{R}^n : Hx \leq h \},$ where the matrix $H \in \mathbb{R}^{m \times n}$ and the vector $h \in \mathbb{R}^m$ are given, and the inequality is interpreted element-wise. This form is called a half-space representation. A function $f:\mathbb{R}^n \mapsto \mathbb{R}$ is convex if and only if its epigraph $\epi(f)$, defined as $\epi(f) = \{(x,t) \in \mathbb{R}^{n+1}: f(x) \leq t\}$, is a convex set of $\mathbb{R}^{n+1}$. Optimisation variables will be denoted by the letter $z$ to distinguish it from the state-space variable $x$.

\section{Preliminaries}
\label{sec:prelim}

In this section, we review some concepts used extensively throughout the paper.

\subsection{Robust linear programming}
\label{subsec:prelim_robust_lp}
Robust linear programming (\gls{lp}) \cite{BV:04} forms a backbone in this paper, hence we will reiterate its definition and crucial results. 
Consider the following robust \gls{lp} problem for polyhedron $\polyhedron \subset \mathbb{R}^n$
\begin{equation}
\begin{aligned}
\minimise_{z} & \quad s^\top z \\ 
\subjectto & \quad a(z)^\top x \leq b(z), \quad \text{ for all }x\in \polyhedron
\end{aligned}
\label{eq:robust_LP}
\end{equation}
where $z \in \mathbb{R}^d$ is the decision variable, $s \in \mathbb{R}^d$ is the cost vector, and $a : \mathbb{R}^d \to \mathbb{R}^n, b : \mathbb{R}^d \to \mathbb{R}$ are functions affine in $z$.
The following result guarantees that Problem \ref{eq:robust_LP} can be reformulated as a \gls{lp} problem in a lifted space.
\begin{prop}[Strong duality of robust \gls{lp} \cite{BV:04}] Consider the robust \gls{lp} problem in Problem \eqref{eq:robust_LP} and the following optimisation problem
    \begin{equation}
    \begin{aligned}
        \minimise_{z,\lambda} & \quad s^\top z \\
        \subjectto & \quad h^\top \lambda \leq b(z) \\ 
        & \quad H^\top \lambda = a(z), \quad \lambda \geq 0.
    \end{aligned}
    \label{eq:dual_robust_LP_formulation}
    \end{equation}
    where $(H, h)$ is the half-space representation of $\polyhedron$.
    Let sets
    \begin{align*}
        \mathcal{Z} &= \{ z \in \mathbb{R}^d : \sup_{x\in \polyhedron} a(z)^\top x \leq b(z) \}, \\
        \mathcal{Z}' &= \{ z \in \mathbb{R}^d : \exists \lambda \in \mathbb{R}^m_{\geq 0}, ~h^\top \lambda \leq b(z),~ H^\top \lambda = a(z) \},
    \end{align*}
    be the feasible set of Problem \eqref{eq:robust_LP} and the feasible set of Problem \eqref{eq:dual_robust_LP_formulation} projected onto its first $d$ coordinates, respectively. 
    Then we have that $\mathcal{Z} = \mathcal{Z}'$.
    \label{prop:main_result_robust_LP}
\end{prop}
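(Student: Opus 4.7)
The plan is to reduce the universally quantified constraint in Problem~\eqref{eq:robust_LP} to an inner LP over $x$ and then invoke LP duality to eliminate $x$ in favor of the dual multiplier $\lambda$. Concretely, I would first rewrite the robust constraint $a(z)^\top x \leq b(z)$ for all $x\in \polyhedron$ as the scalar inequality $\sup_{x\in\polyhedron} a(z)^\top x \leq b(z)$, so that $z \in \mathcal{Z}$ iff this single inequality holds.

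Next, I would observe that the inner supremum is itself a linear program, namely $\max_x \{a(z)^\top x : Hx \leq h\}$, whose LP dual reads $\min_\lambda \{h^\top \lambda : H^\top \lambda = a(z),\, \lambda \geq 0\}$. The key step is then to apply strong LP duality: because $\polyhedron$ is a nonempty polyhedron, whenever the primal optimum is finite it equals the dual optimum, and whenever the primal is unbounded above the dual is infeasible. Consequently, $\sup_{x\in\polyhedron} a(z)^\top x \leq b(z)$ is equivalent to the dual being feasible with some $\lambda \geq 0$ achieving $h^\top \lambda \leq b(z)$ and $H^\top \lambda = a(z)$, which is precisely the condition defining $\mathcal{Z}'$. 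Combining the two directions then yields $\mathcal{Z} = \mathcal{Z}'$.

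The main obstacle is handling the case in which the inner primal LP is unbounded, since here one must argue that dual infeasibility matches primal infeasibility of the robust constraint (so $z \notin \mathcal{Z}$ and $z \notin \mathcal{Z}'$ simultaneously). I would address this either by directly invoking the strong duality theorem for LPs in the form that covers infinite optima, or, equivalently, by a Farkas-lemma argument: $\sup_{x\in\polyhedron} a(z)^\top x \leq b(z)$ holds iff the system $Hx \leq h$, $a(z)^\top x > b(z)$ is infeasible, which by Farkas is iff there exists $\lambda \geq 0$ with $H^\top \lambda = a(z)$ and $h^\top \lambda \leq b(z)$. This alternative path avoids separating the bounded and unbounded cases and delivers $\mathcal{Z}=\mathcal{Z}'$ in one shot, so I would likely present it this way for brevity.
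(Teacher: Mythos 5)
The paper does not actually prove Proposition~\ref{prop:main_result_robust_LP}; it is imported from \cite{BV:04} without argument, so there is no in-paper proof to compare against. Your argument is the standard one and is essentially correct: the inclusion $\mathcal{Z}' \subseteq \mathcal{Z}$ is pure weak duality ($a(z)^\top x = \lambda^\top H x \leq \lambda^\top h \leq b(z)$ for any $x$ with $Hx \leq h$ and any feasible $\lambda$), and the converse follows from strong LP duality applied to the inner maximisation, or in one shot from the affine (nonhomogeneous) Farkas lemma, exactly as you describe.

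One point deserves more care than you give it: nonemptiness of $\polyhedron$ is not a throwaway hypothesis but is genuinely needed, and it must also be carried into your ``one-shot'' Farkas variant. If $\polyhedron = \emptyset$ then $\mathcal{Z} = \mathbb{R}^d$ vacuously, while $\mathcal{Z}'$ can be strictly smaller: with $H = \bigl[\begin{smallmatrix}1 & 0\\ -1 & 0\end{smallmatrix}\bigr]$, $h = (0,-1)^\top$ and $a(z) = (0,1)^\top$, the system $H^\top \lambda = a(z)$ has no solution at all, so $z \notin \mathcal{Z}'$. Correspondingly, the correct theorem of the alternative for ``$Hx \leq h,\ a(z)^\top x > b(z)$ is infeasible'' has \emph{two} alternatives: either there exists $\lambda \geq 0$ with $H^\top\lambda = a(z)$ and $h^\top \lambda \leq b(z)$, or there exists $\mu \geq 0$ with $H^\top \mu = 0$ and $h^\top \mu < 0$ (i.e., $\polyhedron = \emptyset$). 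Only after excluding the second alternative---which is where nonemptiness enters---do you get the clean equivalence defining $\mathcal{Z}'$. In the context of the paper this is harmless, since the polyhedra to which the proposition is applied (the partition cells $\bar{\region}_i$ and the sets $Q_{ij}(\uncertaintyelement)$, when nonempty) satisfy the hypothesis, but a self-contained proof should state it explicitly rather than fold it silently into the Farkas step.
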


\subsection{Scenario optimisation}
\label{subsec:prelim_scenario_approach}
The scenario approach theory establishes sample complexity guarantees for the probability of constraint violation of a  chance-constrained optimisation problem \cite{CG:08}.   
Let $(\uncertaintyspace, \sigmaalgebra, \probmeas)$ be a probability space, where $\uncertaintyspace$ is the sample space, $\sigmaalgebra$ is a $\sigma$-algebra over $\uncertaintyspace$, and $\probmeas$ is a probability measure  over $\sigmaalgebra$. Then,  a chance-constrained  program is defined as:
\begin{equation}
\label{Problem:chanceConstrainedProblem}
    \begin{aligned}
        \min_{z} & \quad s^\top z \\
        \subjectto & \quad \probmeas\{\uncertaintyelement \in \uncertaintyspace : g(z, \uncertaintyelement) \leq 0\} \geq 1-\epsilon, 
    \end{aligned}
\end{equation}
where $z\in \mathbb{R}^d$ is the optimisation variable, $s \in \mathbb{R}^d$ are the cost coefficients, $g : \mathbb{R}^d \times \uncertaintyspace \to \mathbb{R}$ is a function that is convex in $z$ for each value of $\uncertaintyelement$ and measurable in $\uncertaintyelement$ for each value of $z$, and $\epsilon \in (0, 1)$ is a given bound on constraint violation.

Now assume $\sampleset = \{\uncertaintyelement_1, \ldots, \uncertaintyelement_N$\} is a set of independent samples from $\probmeas$. Note that the set $\sampleset$ belongs to the space $(\uncertaintyspace^N, \otimes_N \sigmaalgebra, \probmeas^N)$, where $\uncertaintyspace^N$ is the $N$-fold Cartesian product of $\uncertaintyspace$, and $\otimes_N \sigmaalgebra$ is the product $\sigma$-algebra generated by the $\sigma$-algebra $\sigmaalgebra$ and $\probmeas^N$ represents the induced measure on $\uncertaintyspace^N$ \cite{CG:08}. Then,  at the core of the scenario approach is the construction of the scenario program
\begin{equation}
    \begin{aligned}
    \minimise_{z} & \quad s^\top z \\
    \subjectto & \quad g(z,\uncertaintyelement) \leq 0, \quad \text{ for all } \uncertaintyelement \in \sampleset.
    \end{aligned}
    \label{eq:scenario_program}
\end{equation}
The idea of the scenario approach is to use Problem \eqref{eq:scenario_program} to obtain high confidence bounds on the solution of Problem \eqref{Problem:chanceConstrainedProblem}.
To do that, we need some standard assumptions \cite{CG:08}.
\begin{assump}
    Assume that:
    \begin{itemize}
        \item $\probmeas^N$-almost surely, the feasible set of Problem \eqref{eq:scenario_program} given by
        $
            \mathcal{Z} = \{ z \in \mathbb{R}^d : g(z,\uncertaintyelement) \leq 0, \text{ for all } \uncertaintyelement \in \sampleset \},
        $ 
        has non-empty interior.
        \item $\probmeas^N$-almost surely, the optimal solution of Problem \eqref{eq:scenario_program} exists and is unique.
    \end{itemize}
    \label{assump:well-posedness-scenario}
\end{assump}

Denote by $z^\star(\sampleset)$ the unique, optimal solution of Problem \eqref{eq:scenario_program}, which is a random variable from $\uncertaintyspace^N $ to $\mathbb{R}^d$. Then, we are ready to state Proposition \ref{Prop:Scenario}. 
\begin{prop}[\cite{CG:08}]
\label{Prop:Scenario}
Let $N \in \mathbb{N}$ represent the number of available samples and
$
    V(z) = \probmeas\{\uncertaintyelement \in \uncertaintyspace : g(z, \uncertaintyelement) > 0\}.
$ be the probability of constraint violation associated with $z^\star(\sampleset)$. Assume a threshold $\epsilon \in (0,1)$ is given.  Then we have that
\[
\probmeas^N \{ \sampleset \in \uncertaintyspace^N : V(z^\star(\sampleset)) > \epsilon \} \leq \sum_{i = 0}^{d-1} \binom{N}{i} \epsilon^i (1-\epsilon)^{N-i}.
\]
\label{prop:scenario_approach_theory}
\end{prop}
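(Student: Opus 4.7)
My plan is to follow the classical support-constraint approach of \cite{CG:08}. Given a sample set $\sampleset=\{\uncertaintyelement_1,\dots,\uncertaintyelement_N\}$, I would call $\uncertaintyelement_i$ a \emph{support constraint} if removing the inequality $g(z,\uncertaintyelement_i)\leq 0$ from Problem \eqref{eq:scenario_program} strictly decreases the optimum. The cornerstone deterministic fact is that, under the convexity of $g$ in $z$ and the non-degeneracy provided by Assumption \ref{assump:well-posedness-scenario}, the support set $I(\sampleset)$ has cardinality at most $d$. I would establish this via an exchange argument on the optimal active face: were there $d{+}1$ support indices, one could construct two distinct minimisers agreeing on $d$ constraints and differing on the $(d{+}1)$th, contradicting uniqueness.

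Given this lemma, I would partition the failure event $\{V(z^\star(\sampleset))>\epsilon\}$ over the identity of the support set $I(\sampleset)\subseteq\{1,\dots,N\}$. Writing $z_I^\star$ for the optimiser of Problem \eqref{eq:scenario_program} restricted to the samples indexed by $I$, the event $\{I(\sampleset)=I\}$ forces $z^\star(\sampleset)=z_I^\star$ and makes the optimiser a function only of $\{\uncertaintyelement_i\}_{i\in I}$, independent of the remaining samples. By iid symmetry and exchangeability, all $\binom{N}{|I|}$ subsets of a given size contribute equally, so it suffices to analyse one fixed $I$ and multiply. Conditional on $V(z_I^\star)>\epsilon$, each of the $N-|I|$ free samples fails to violate $g(z_I^\star,\cdot)$ with probability at most $1-\epsilon$ and does so independently, which integrates to the preliminary bound $\probmeas^N(V(z^\star)>\epsilon)\leq\binom{N}{d}(1-\epsilon)^{N-d}$.

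The hard part is sharpening this crude union bound to the binomial-tail expression $\sum_{i=0}^{d-1}\binom{N}{i}\epsilon^i(1-\epsilon)^{N-i}$. I would handle this by the refined accounting of Campi and Garatti: couple the $N$-sample problem with an enlarged iid sample and use permutation invariance to reinterpret the chance of any particular free sample becoming a support constraint as a Bernoulli trial with parameter bounded by $\epsilon$; a telescoping identity over subset sizes then collapses the arithmetic to the cumulative distribution function of a $\mathrm{Binomial}(N,\epsilon)$ random variable evaluated at $d-1$, which is exactly the right-hand side of the claim. The support-constraint cardinality lemma and this tightening step are the two pieces requiring genuine care, with the latter being the main obstacle, since a naive union bound double-counts realisations in which several size-$d$ subsets yield the same optimiser; the remaining measurability and convexity bookkeeping is routine once Assumption \ref{assump:well-posedness-scenario} is in force.
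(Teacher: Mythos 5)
First, a point of orientation: the paper does not prove Proposition~\ref{prop:scenario_approach_theory} at all --- it is imported verbatim from Campi and Garatti \cite{CG:08} and used as a black box, so there is no in-paper proof to compare against. Your proposal is therefore an attempt to reconstruct the external result, and it correctly reproduces the skeleton of that argument: support constraints, the cardinality bound $|I(\sampleset)|\leq d$, the partition of the failure event over candidate support sets, and the exchangeability/conditional-independence computation that yields the crude bound $\binom{N}{d}(1-\epsilon)^{N-d}$. That much is sound and matches the Calafiore--Campi line of reasoning.

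There are, however, two genuine gaps at exactly the points you flag as ``requiring care.'' First, the cardinality lemma: the ``exchange argument'' producing two distinct minimisers that contradict uniqueness is not the standard proof and, as described, does not obviously go through; the established route bounds the number of support constraints by $d$ via Helly's (or Radon's) theorem applied to the convex sets $\{z: g(z,\uncertaintyelement_i)\leq 0\}$ intersected with a sublevel set of the objective, and you would need to supply that argument or an equivalent one. Second, and more seriously, the sharpening from the union bound to the binomial tail $\sum_{i=0}^{d-1}\binom{N}{i}\epsilon^i(1-\epsilon)^{N-i}$ is not a ``telescoping identity'' over subset sizes. The Campi--Garatti proof establishes the bound with \emph{equality} for fully supported problems by deriving the family of moment identities $\binom{N}{d}\,\mathbb{E}\bigl[(1-V)^{N-d}\bigr]=1$ for every $N\geq d$ (from the fact that the events $\{I(\sampleset)=I\}$ partition $\uncertaintyspace^N$), which uniquely pins down the law of $V(z^\star)$ as $\mathrm{Beta}(d,N-d+1)$; the binomial tail is then the incomplete-beta identity. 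The general (possibly degenerate) case is handled by a separate perturbation argument showing the violation probability can only decrease. Neither of these mechanisms is present in your sketch, so as written the proposal does not close the main step; your concern about double-counting in the naive union bound is real, but the fix is the moment-identity/Beta characterisation, not permutation bookkeeping on an enlarged sample.
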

Proposition \ref{prop:scenario_approach_theory} will be key in establishing safety guarantees for the class of stochastic models considered in this paper. 

\section{Problem statement}
\label{sec:problem_statement}

The goal of this paper is to certify safety for \acrlong{pwa} stochastic systems, which we formally introduce in Section \ref{sec:PWASystemsDefinition}, while probabilistic safety is introduced in Section \ref{subsec:pro_formulation_safety}.

\subsection{Piece-wise affine stochastic systems}
\label{sec:PWASystemsDefinition}
Let $\regionset = \{\region_1, \ldots, \region_\numberregions\}$ be a polyhedral partition of the state space $X \subseteq \mathbb{R}^n$, where each $\region_i$, $i=1,\ldots,\numberregions$ is given by its half-space representation.
Consider the following discrete-time stochastic \gls{pwa} system: 
\begin{equation}
    \label{eq:system}
    \mathbf{x}(k+1) = f(\mathbf{x}(k)) + \noise(k), \quad \mathbf{x}(0) \in \initialset,
\end{equation}
where $k \in \mathbb{N}$ denotes the (discrete) time index, $\initialset $ is a set of initial states, and $f:X \mapsto \mathbb{R}^n$ is a \gls{pwa} vector field given by
\[ f(x) = f_i(x) = A_i x + b_i, \quad x \in \region_i \subseteq \mathbb{R}^n, \]
for some matrix $A_i \in \mathbb{R}^{n \times n}$ and vector $b_i \in \mathbb{R}^{n}$.
The additive term $(\noise(k))_{k \in \mathbb{N}}$
is a sequence of independent and identically distributed random variables representing an additive noise term. We assume that $\noise(k)$ is defined on the filtered probability space $(\uncertaintyspace, \sigmaalgebra, (\sigmaalgebra_k)_{k \in \mathbb{N}},\mathbb{P})$, where $\sigmaalgebra_{k}$ is the natural filtration of the process $\noise(k)$, and $\mathbb{P}$ is assumed to be unknown. Consequently, $(\mathbf{x}(k))_{k \in \mathbb{N}}$ is also a stochastic process in the space $(\uncertaintyspace, \sigmaalgebra, \probmeas)$ that is, it is $\sigmaalgebra_{k-1}$-measurable \cite{Shapiro2021-zc}.
We note that System \eqref{eq:system} represents a flexible and expressive model. In fact, not only does it include linear systems, but we note that \gls{pwa} functions can approximate any non-linear function arbitrarily well.

\subsection{Time-bounded probabilistic safety}
\label{subsec:pro_formulation_safety} 
Our goal is to study probabilistic safety for System \eqref{eq:system}. 
\begin{defi}[Probabilistic safety \cite{PJP:07}]\label{defi:weak_FI}
Let $T \in \mathbb{N}$ be a time horizon and $\safeset$ be a measurable subset of $X$\footnote{If $X \neq \mathbb{R}^n$ then it may be necessary to replace $\mathbf{x}(k)$ with an equivalent stopped process $\tilde{\mathbf{x}}(k)$ \cite{Santoyo2021}.}. We define probability safety\footnote{
Notice that $\eta(k)$ is measurable function $\mathbb{N} \times \uncertaintyspace \to \mathbb{R}^n$, omitting the dependence on $\uncertaintyelement \in \uncertaintyspace$. Therefore, when we use the notation $\probmeas\{ \uncertaintyelement \in \uncertaintyspace : \mathbf{x}(k) \in \safeset, \text{ for all } k \in \{1, \ldots, T\}\}$, the reader should have in mind that the process $\mathbf{x}$ is dependent on $\uncertaintyelement$. Please refer to \cite{Shapiro2021-zc} for more details.
} for System \eqref{eq:system} as
\begin{equation}
\zeta(\safeset,T) = \mathbb{P}\{ \uncertaintyelement \in \uncertaintyspace : \mathbf{x}(k) \in \safeset \text{ for all } k \in \{0, \ldots, T\} \}.
\end{equation}
\end{defi}

We assume that, while $f$ is known, $\eta$ is unknown and we can only generate \gls{iid} samples from it. Under these assumptions, the goal in this paper is to compute a (non-trivial) lower bound on $\zeta(\safeset,T)$ for System \eqref{eq:system}.

Our approach is based on using the sampled data to synthesize a \acrfull{pwa} \acrfull{sbf} for System \eqref{eq:system} with high confidence. 
In order to do that, in Section \ref{subsec:data_driven_sbf_infinite_representation} we develop a novel and powerful inner approximation for the feasible set of stochastic programs in terms of chance-constrained optimisation. This result is employed to use the scenario approach to synthesize \gls{sbf} for System \eqref{eq:system} with high confidence and by requiring a number of data logarithmic in the negative inverse of the confidence.
In Section \ref{subsec:data_driven_sbf_finite_representation}, we show that in the setting considered in this paper the resulting optimization problem reduces to \gls{lp}, thus enabling efficient and scalable synthesis.
Before presenting the main result, we review in the next section \glspl{sbf} and how they can be employed to guarantee a lower bound on $\zeta(\safeset,T)$.

\section{Stochastic barrier function (SBF)}
\label{sec:stochastic_barrier_functions}

\Glspl{sbf} are Lyapunov-like functions commonly employed to compute the safety probability of stochastic systems \cite{PJP:07}. 
\begin{defi}[\acrlong{sbf}]\label{defi:Barrier_certificate}
Let $\unsafeset = X \setminus \safeset$ be the unsafe set and $\initialset \subseteq \safeset$ the set of initial states, then a non-negative function $B: X \mapsto \mathbb{R}_{\geq 0}$ is called a \acrlong{sbf} if there exist non-negative constants $\gamma, \cmartingale$ satisfying the following conditions
\begin{align}
    B(x) \leq \gamma, &~ \text{ for all } x \in \initialset,\label{eq:initial_set_constraint}\\
    B(x) \geq 1, & ~ \text{ for all }x \in \unsafeset,
\end{align}
\begin{equation}
    \mathbb{E} \left[ B(f(x) + \noise(\uncertaintyelement)) \right] \leq B(x) + \cmartingale, ~ \text{ for all } x \in \safeset. \label{eq:c-martingale}
\end{equation}
where the expectation is with respect to $\uncertaintyelement \sim \probmeas$.
\end{defi}

\begin{figure}
    \centering
    \includegraphics[width=\linewidth]{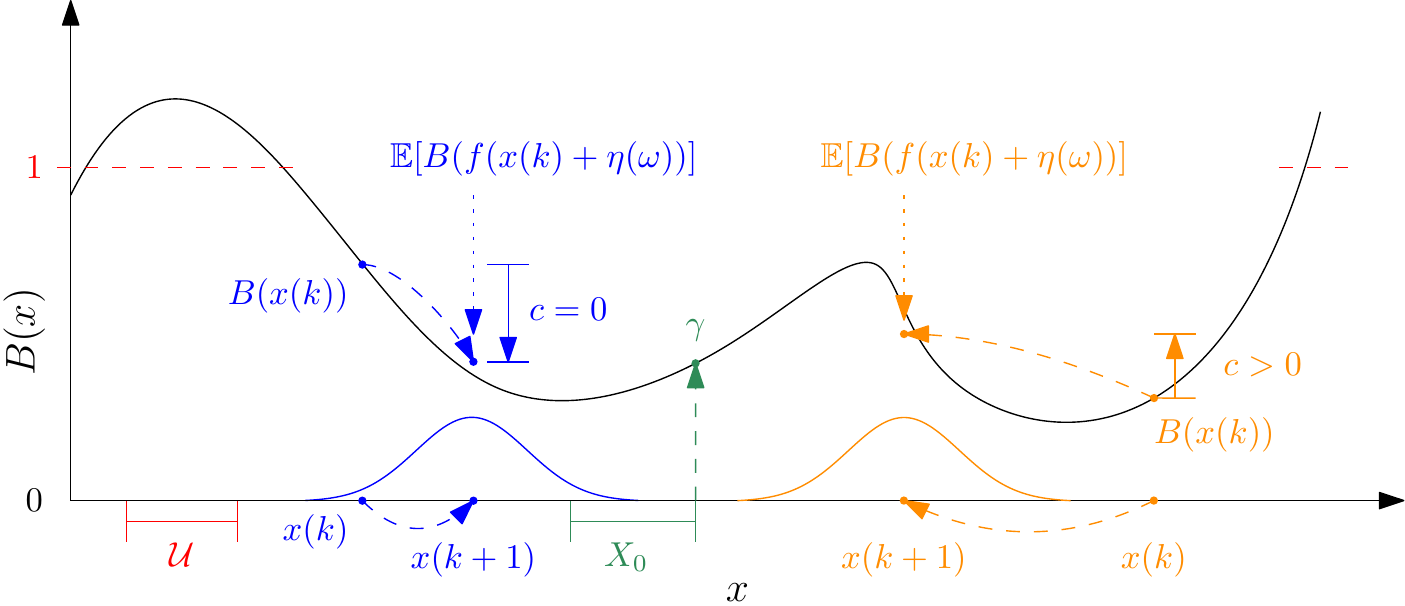}
    \caption{The figure is borrowed from \cite{Mathiesen2013}. A \gls{sbf} $B(x)$ is a non-negative function that is greater than $1$ in an unsafe region $\unsafeset$, which is the complement of the safe set $\safeset$. The variable $\gamma$ is an upper bound for $B(x)$ over an initial region $\initialset$. The upper bound for the expected increase in $B(x)$ after one step of \eqref{eq:system} over the safe set $\safeset$ is denoted $\cmartingale$. Then it holds that the probability of safety $\zeta(\safeset, T) \geq 1 - (\gamma + \cmartingale T)$.}
    \label{fig:example_continuous_barrier_function}
\end{figure}

A pictorial representation of a \gls{sbf} is presented in Figure \ref{fig:example_continuous_barrier_function}.
Intuitively, the conditions in Definition \ref{defi:Barrier_certificate} allows one to use martingale inequalities to lower bound probabilistic safety.

\begin{prop}[{\cite[Chapter 3, Theorem 3]{kushner1967stochastic}}]
Let $B$ be a \gls{sbf} satisfying the conditions in Definition \ref{defi:Barrier_certificate} for System \eqref{eq:system}, time horizon $T$, and safe set $\safeset$. Then, it holds that $\zeta(\safeset,T) \geq 1 - (\gamma + \cmartingale T)$.
\label{prop:barrier_prob_safety}
\end{prop}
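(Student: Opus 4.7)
The plan is to follow the classical Kushner argument: construct a stopped, non-negative process from $B(\mathbf{x}(k))$, establish a one-step drift inequality with slack $\cmartingale$, and conclude via a Markov-type inequality exploiting the fact that $B \geq 1$ on $\unsafeset$. First, I introduce the first exit time $\tau := \inf\{k \in \mathbb{N} : \mathbf{x}(k) \notin \safeset\}$ with the convention $\inf \emptyset = \infty$, which is a stopping time with respect to $(\sigmaalgebra_k)_{k \in \mathbb{N}}$, and the stopped process $Y(k) := B(\mathbf{x}(k \wedge \tau))$. By construction $Y$ is non-negative, and $Y(0) = B(\mathbf{x}(0)) \leq \gamma$ since $\mathbf{x}(0) \in \initialset$ by \eqref{eq:initial_set_constraint}.

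Second, I establish the one-step drift inequality $\mathbb{E}[Y(k+1) \mid \sigmaalgebra_k] \leq Y(k) + \cmartingale \, \mathbf{1}_{\{k < \tau\}}$. On the $\sigmaalgebra_k$-measurable event $\{k < \tau\}$ we have $\mathbf{x}(k) \in \safeset$, so combining condition \eqref{eq:c-martingale} applied pointwise with the independence of $\noise(k)$ from $\sigmaalgebra_k$ yields $\mathbb{E}[B(\mathbf{x}(k+1)) \mid \sigmaalgebra_k] \leq B(\mathbf{x}(k)) + \cmartingale$ on this event; on the complement $\{k \geq \tau\}$ the stopped process is constant and the inequality is trivial. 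Taking unconditional expectations, telescoping from $k = 0$ to $T-1$, and bounding $\sum_{k=0}^{T-1} \probmeas\{k < \tau\} \leq T$, I obtain $\mathbb{E}[Y(T)] \leq \mathbb{E}[Y(0)] + \cmartingale T \leq \gamma + \cmartingale T$.

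Third, on the event $\{\tau \leq T\}$ we have $\mathbf{x}(T \wedge \tau) = \mathbf{x}(\tau) \in \unsafeset$, hence $Y(T) \geq 1$; combined with $Y \geq 0$ this gives $\probmeas\{\tau \leq T\} \leq \mathbb{E}[Y(T)] \leq \gamma + \cmartingale T$ by Markov's inequality. The desired bound then follows from $\zeta(\safeset, T) = \probmeas\{\tau > T\} = 1 - \probmeas\{\tau \leq T\} \geq 1 - (\gamma + \cmartingale T)$. The main technical subtlety I anticipate lies in the careful treatment of the stopped process, in particular verifying that $\{k < \tau\}$ is $\sigmaalgebra_k$-measurable and correctly invoking the independence of $\noise(k)$ from $\sigmaalgebra_k$ to reduce the conditional expectation on $\{k < \tau\}$ to the deterministic inequality \eqref{eq:c-martingale} evaluated at $x = \mathbf{x}(k)$. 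This stopping construction also sidesteps the issue flagged in the paper's footnote when $X \neq \mathbb{R}^n$, since the dynamics past the exit from $\safeset$ are never used.
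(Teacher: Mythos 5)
Your proof is correct, and it is precisely the classical supermartingale/optional-stopping argument behind the cited result of Kushner --- the paper itself gives no proof of this proposition, so there is nothing to diverge from. Two minor remarks. First, under the paper's stated convention the natural filtration is $\sigmaalgebra_k = \sigma(\noise(0),\dots,\noise(k))$ and $\mathbf{x}(k)$ is $\sigmaalgebra_{k-1}$-measurable; hence the one-step drift inequality should be obtained by conditioning on $\sigmaalgebra_{k-1}$ (with respect to which $\{\tau > k\}$ and $\mathbf{x}(k)$ are measurable and $\noise(k)$ is independent), rather than on $\sigmaalgebra_k$, of which $\noise(k)$ is measurable and not independent --- a harmless reindexing, but worth aligning with the paper's setup. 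Second, your closing claim that the stopping construction sidesteps the footnote's caveat for $X \neq \mathbb{R}^n$ is not quite right: at the exit step the state $\mathbf{x}(\tau)$ may leave $X$ entirely, where $B$ is undefined and the condition $B \geq 1$ on $\unsafeset = X \setminus \safeset$ gives no lower bound, which is exactly why the footnote resorts to a stopped process; the clean fix is to take $\unsafeset$ to be the complement of $\safeset$ in $\mathbb{R}^n$ (or assume $X = \mathbb{R}^n$).
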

Thanks to Proposition \ref{prop:barrier_prob_safety}, a sufficient condition to establish a lower bound on the safety probability is to design a \gls{sbf} satisfying Equations \eqref{eq:initial_set_constraint}-\eqref{eq:c-martingale}. This can be obtained by solving the following stochastic program where $B$ is parameterised by $\theta$ as $B(x, \theta)$ according to a chosen function class
\begin{equation}
    \min\limits_{\gamma,\, \cmartingale,\, \theta} \quad \gamma + \cmartingale T,
    \tag{BP}
	\label{eq:main_opt_pro}
\end{equation}
subject to the conditions in Definition \ref{defi:Barrier_certificate}\footnote{For all barrier programs, we use an abbreviated reference to carry semantic meaning about the variation, such as Problem \eqref{eq:main_opt_pro} for the general barrier program.}. In other words, synthesis of a \gls{sbf} can be framed as a minimisation over $\gamma + \cmartingale T$.
In this optimisation problem, the expectation condition (Equation \eqref{eq:c-martingale}) can generally be computed analytically only under some  strong assumptions on the noise distribution \cite{Jagtap2020, Santoyo2021}.
Our approach proposes a new, inner chance-constrained approximation of Problem \eqref{eq:main_opt_pro}, which allows us to rely on tools from scenario optimisation to synthesize a barrier \cite{CG:08}. The resulting approach is a distribution-free, data-driven method to obtain a \gls{sbf} as a safety certificate with a high confidence of validity.
Note that to guarantee the convexity of Problem \eqref{eq:main_opt_pro}, $B$ is generally restricted to be either a SoS polynomial or an exponential function \cite{Santoyo2021}. In this paper, motivated by the structure of System \eqref{eq:system}, we will consider \acrlong{pwa} $B$, which have the flexibility to be able to model arbitrarily well any continuous function assuming the number of pieces of $B$ is large enough.

\section{Data-driven stochastic barrier function design}
\label{sec:data_driven_sbf_design}

In this section, we present the main results of this paper. In Section \ref{subsec:data_driven_sbf_infinite_representation}, an inner approximation of the feasible set of Problem \eqref{eq:main_opt_pro} in terms of a chance-constrained problem (Theorem \ref{thm:barrier-ccp}) is described. Such a relaxation allows us to use the scenario approach to derive high confidence bounds on the resulting solution (Corollary \ref{corollary:scenario-barrier-design}). Finally
in Section \ref{subsec:data_driven_sbf_finite_representation}, we will introduce \gls{pwa} \glspl{sbf} and show how for this class of barriers the resulting scenario approach is a \gls{lp}.
\subsection{Data-driven stochastic barrier design}\label{subsec:data_driven_sbf_infinite_representation}
Solving Problem \eqref{eq:main_opt_pro} is challenging because analytic expressions of the expectation constraint are rarely available, even if the distribution $\probmeas $ is known (which is not the case in this paper).
To solve this problem, in Theorem \ref{thm:barrier-ccp}, we derive a chance-constrained problem whose feasible set is a subset of the feasible set for Problem \eqref{eq:main_opt_pro}. Thus, its optimal solution is an upper bound to that of Problem \eqref{eq:main_opt_pro}. 

\begin{theorem}
    Consider System \eqref{eq:system}, and the barrier function $B(x, \theta)$ as in Definition \ref{defi:Barrier_certificate}, where $B$ is convex in $\theta$. Assume a given $\epsilon \in (0,1)$ and $M \geq 1$, and define decision variables $z = (\cmartingale,\gamma,\theta)$. Let $g(x, z, \noise(\uncertaintyelement)) = B(f(x) + \noise(\uncertaintyelement),\theta)$ and $h(x,z) = B(x,\theta) + \cmartingale$, and choose $\buffervar \geq \frac{\epsilon M}{1-\epsilon}$. Define the set
    \[
        E(x, z) = \left\{ \uncertaintyelement \in \uncertaintyspace : g(x, z, \noise(\uncertaintyelement)) + \buffervar \leq h(x, z)\right\}.
    \]
    Then, the feasible set of the chance-constrained barrier program
	\begin{equation}
		\begin{aligned}
			\min_{\gamma \geq 0,\, \cmartingale \geq 0,\, \theta} & \quad \gamma + \cmartingale T\\
			\subjectto & \quad \begin{aligned}
				&\begin{alignedat}{3}
					& B(x,\theta) \in [0, M], &&\quad \text{ for all } x\in \mathbb{R}^n, \\[1ex]
					&B(x,\theta) \leq \gamma, &&\quad \text{ for all } x \in \initialset, \\[1ex]
					&B(x,\theta) \geq 1, &&\quad \text{ for all } x \in \unsafeset, \\[1ex]
				\end{alignedat}\\
				&\mathbb{P}\left\{E(x, z) \right\} \geq 1 - \epsilon, \text{ for all } x \in \safeset,
			\end{aligned}
		\end{aligned}
		\tag{CCBP}
		\label{eq:ccp_opt_pro}
	\end{equation}
	is contained in the feasible set of Problem \eqref{eq:main_opt_pro}.	
	\label{thm:barrier-ccp}
\end{theorem}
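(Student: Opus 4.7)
The goal is to show that every triple $(\gamma, \cmartingale, \theta)$ feasible for \eqref{eq:ccp_opt_pro} also satisfies the three conditions of Definition~\ref{defi:Barrier_certificate} characterising feasibility of \eqref{eq:main_opt_pro}. Two of those conditions, $B(x,\theta) \leq \gamma$ on $\initialset$ and $B(x,\theta) \geq 1$ on $\unsafeset$, appear verbatim among the constraints of \eqref{eq:ccp_opt_pro}, so the only real work is to recover the expectation inequality
\[
\mathbb{E}\bigl[B(f(x) + \noise,\theta)\bigr] \leq B(x,\theta) + \cmartingale, \qquad x \in \safeset,
\]
from the chance constraint together with the uniform bound $B(\cdot,\theta) \in [0,M]$ and the hypothesis on $\buffervar$.

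Fix $x \in \safeset$ and set $Y = B(f(x) + \noise,\theta)$, $\alpha = B(x,\theta) + \cmartingale$, and $A = \{\uncertaintyelement : Y + \buffervar \leq \alpha\}$. By the boundedness constraint $0 \leq Y \leq M$ almost surely, and the chance constraint reads $\mathbb{P}(A) \geq 1 - \epsilon$. A useful preliminary observation is that feasibility forces $\alpha - \buffervar \geq 0$, since $Y \geq 0$ makes $A$ a null event whenever $\alpha - \buffervar < 0$, contradicting $1 - \epsilon > 0$. Moreover, if $\alpha - \buffervar > M$ then $\alpha > M \geq \mathbb{E}[Y]$ and the desired inequality is trivial, so I may assume $0 \leq \alpha - \buffervar \leq M$.

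The main step is a decomposition $\mathbb{E}[Y] = \mathbb{E}[Y\mathbf{1}_A] + \mathbb{E}[Y\mathbf{1}_{A^c}]$. Bounding $Y \leq \alpha - \buffervar$ on $A$ and $Y \leq M$ on $A^c$ gives $\mathbb{E}[Y] \leq (\alpha - \buffervar)(1-p) + Mp$, where $p := \mathbb{P}(A^c) \in [0,\epsilon]$. Since $\alpha - \buffervar \leq M$, this linear expression is non-decreasing in $p$, so its maximum over $p \in [0,\epsilon]$ is attained at $p = \epsilon$, yielding
\[
\mathbb{E}[Y] \leq (\alpha - \buffervar)(1 - \epsilon) + M \epsilon.
\]
Demanding $\mathbb{E}[Y] \leq \alpha$ and rearranging gives $(M - \alpha)\epsilon \leq \buffervar(1 - \epsilon)$; since $\alpha \geq 0$ it is enough to have $M\epsilon \leq \buffervar(1-\epsilon)$, which is precisely the hypothesis $\buffervar \geq \epsilon M / (1 - \epsilon)$.

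The main obstacle I anticipate is the careful tail analysis: without the uniform upper bound $M$ on $B$ the contribution $\mathbb{E}[Y\mathbf{1}_{A^c}]$ is uncontrolled, and the specific constant $\epsilon M / (1 - \epsilon)$ in the buffer hypothesis is exactly what the worst-case splitting of probability mass between the body $A$ and the tail $A^c$ forces. The remainder of the argument is straightforward algebraic rearrangement.
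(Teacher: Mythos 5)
Your proof is correct and follows essentially the same route as the paper's: both split $\mathbb{E}[B(f(x)+\noise,\theta)]$ over the event $E(x,z)$ and its complement, bound the integrand by $h(x,z)-\buffervar$ on the event and by $M$ on the complement, and then use $\probmeas\{E(x,z)\}\geq 1-\epsilon$ together with $h\geq 0$ and the choice $\buffervar\geq \epsilon M/(1-\epsilon)$ to close the argument. Your extra bookkeeping (the observation that $\alpha-\buffervar\geq 0$, the trivial case $\alpha-\buffervar>M$, and the monotonicity in $p=\probmeas\{A^c\}$) is a slightly more explicit packaging of the worst-case analysis the paper performs directly on $-\buffervar\probmeas\{E\}+M\probmeas\{E^c\}\leq 0$, but it is the same underlying estimate.
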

The proof of Theorem \ref{thm:barrier-ccp} is reported in Section \ref{sec:technical_proofs}.
Theorem \ref{thm:barrier-ccp} opens new ways for data-driven design of \acrlongpl{sbf}.
Rather than relying on standard concentration inequalities to approximate the expectation in Equation \eqref{eq:c-martingale} as in \cite{SALAMATI20217}, we can perform chance-constraint tightening with the parameter $\buffervar$ to guarantee the feasible set of \eqref{eq:barrier_sp_opt_pro} is an inner approximation of \eqref{eq:ccp_opt_pro}. Building on this result, in Corollary \ref{corollary:scenario-barrier-design} we use the scenario approach to design \glspl{sbf} from data with high confidence. 

\begin{corollary}
	Assume that $\sampleset = \{ \uncertaintyelement_1, \ldots, \uncertaintyelement_N \}$ is a collection of $N$ independent samples from the distribution $\probmeas$. Fix $\epsilon \in (0,1)$, $M \geq 1$ and $\buffervar \geq \frac{\epsilon M}{1-\epsilon}$, and let $\beta = \sum_{i = 0}^{d-1} \binom{N}{i} \epsilon^i (1-\epsilon)^{N-i},$
	where $d = |\theta| + 2$. Let $(\cmartingale^\star,\gamma^\star,\theta^\star)$ be the optimal solution to the scenario program 
	\begin{equation}
		\begin{aligned}
			\min_{\gamma \geq 0, \, \cmartingale \geq 0,\, \theta} & \quad \gamma + \cmartingale T\\
			\subjectto
				&\quad B(x, \theta) \in [0, M], && \text{ for all } x \in \mathbb{R}^n, \\[1ex]
				&\quad B(x, \theta) \leq \gamma, && \text{ for all } x \in \initialset , \\[1ex]
				&\quad B(x, \theta) \geq 1, && \text{ for all } x \in \unsafeset, \\[1ex]
				&\quad g(x, z, \noise(\uncertaintyelement)) + \buffervar \leq h(x, z), && \\
				   & \qquad \text{ for all } \uncertaintyelement \in \sampleset, &&\text{ for all } x \in \safeset,
		\end{aligned}
		\tag{SBP}
		\label{eq:barrier_sp_opt_pro}
	\end{equation}
	where $g$ and $h$ are defined as in Theorem \ref{thm:barrier-ccp}. Then, with confidence $1-\beta$, it holds that
  $\zeta(\safeset,T) \geq 1 - (\gamma^\star + \cmartingale^\star T)$.
	
	\label{corollary:scenario-barrier-design}
\end{corollary}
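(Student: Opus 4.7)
The plan is to chain together three ingredients already developed in the paper: the scenario approach of Proposition \ref{prop:scenario_approach_theory}, the inner chance-constrained relaxation of Theorem \ref{thm:barrier-ccp}, and the martingale-based safety bound of Proposition \ref{prop:barrier_prob_safety}.

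First, I would cast \eqref{eq:barrier_sp_opt_pro} as an instance of the canonical scenario program \eqref{eq:scenario_program}. The non-probabilistic constraints (boundedness in $[0,M]$, the bound on $\initialset$, the lower bound on $\unsafeset$) do not depend on $\uncertaintyelement$ and simply restrict the convex feasible domain of $z = (\cmartingale, \gamma, \theta) \in \mathbb{R}^{|\theta|+2}$. The $\uncertaintyelement$-dependent constraint, on the other hand, has an additional inner universal quantifier over $x \in \safeset$, which I would absorb by defining
\[
    \tilde{g}(z, \uncertaintyelement) \;=\; \sup_{x \in \safeset}\bigl( g(x, z, \noise(\uncertaintyelement)) + \buffervar - h(x, z) \bigr),
\]
so that the $i$-th sampled constraint becomes simply $\tilde{g}(z, \uncertaintyelement_i) \leq 0$. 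Since $B$ is convex in $\theta$ by assumption, pointwise suprema preserve convexity, so $\tilde{g}(\cdot, \uncertaintyelement)$ is convex in $z$, placing us in the convex scenario setting.

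Second, I would apply Proposition \ref{prop:scenario_approach_theory} with dimension $d = |\theta|+2$. Under Assumption \ref{assump:well-posedness-scenario}, this yields that, with $\probmeas^N$-confidence at least $1 - \beta$, the optimiser $z^\star = (\cmartingale^\star, \gamma^\star, \theta^\star)$ satisfies
\[
    \probmeas\bigl\{ \uncertaintyelement \in \uncertaintyspace : \tilde{g}(z^\star, \uncertaintyelement) \leq 0 \bigr\} \;\geq\; 1 - \epsilon.
\]
For every fixed $x \in \safeset$, the event $\{\tilde{g}(z^\star, \uncertaintyelement) \leq 0\}$ is contained in $E(x, z^\star)$ by definition of the supremum, so $\probmeas\{E(x, z^\star)\} \geq 1 - \epsilon$ holds uniformly in $x$. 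Together with the deterministic constraints that are shared verbatim between \eqref{eq:barrier_sp_opt_pro} and \eqref{eq:ccp_opt_pro}, this shows that $z^\star$ is feasible for the chance-constrained program. Theorem \ref{thm:barrier-ccp} then transfers feasibility to the original stochastic barrier program \eqref{eq:main_opt_pro}, which means $B(\cdot, \theta^\star)$ is a bona fide \gls{sbf} in the sense of Definition \ref{defi:Barrier_certificate} with parameters $(\gamma^\star, \cmartingale^\star)$. Proposition \ref{prop:barrier_prob_safety} then delivers $\zeta(\safeset, T) \geq 1 - (\gamma^\star + \cmartingale^\star T)$, and this holds with the same confidence $1 - \beta$ as the scenario guarantee.

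The main obstacle I foresee is the first step: justifying that the reformulation via $\tilde{g}$ legitimately inherits the hypotheses of Proposition \ref{prop:scenario_approach_theory}, in particular convexity in $z$ and the well-posedness conditions of Assumption \ref{assump:well-posedness-scenario} (non-empty interior of the sampled feasible set and uniqueness of the optimiser). In the linearly-parameterised \gls{pwa} setting of Section \ref{subsec:data_driven_sbf_finite_representation}, both $g$ and $h$ are affine in $z$, so $\tilde{g}$ is a supremum of affine functions and hence convex, and uniqueness can be enforced by a lexicographic tie-breaking rule; but carrying the argument through at the stated convex-in-$\theta$ level of generality requires some care to ensure measurability of $\tilde{g}(z,\cdot)$ and to rule out pathological non-uniqueness.
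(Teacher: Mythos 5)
Your proof is correct and follows exactly the chain the paper intends (the corollary is left without an explicit proof, but it is precisely Proposition \ref{prop:scenario_approach_theory} applied to \eqref{eq:barrier_sp_opt_pro} as the scenario counterpart of \eqref{eq:ccp_opt_pro}, followed by Theorem \ref{thm:barrier-ccp} and Proposition \ref{prop:barrier_prob_safety}); your device of absorbing the inner quantifier over $x \in \safeset$ into a supremum is the natural way to fit the semi-infinite sampled constraint into the template of Problem \eqref{eq:scenario_program}, and it correctly yields the uniform-in-$x$ chance constraint needed for \eqref{eq:ccp_opt_pro}. The convexity caveat you flag is genuine but is shared by the paper itself at the stated ``convex in $\theta$'' level of generality---the sampled constraint involves the difference $B(f(x)+\noise(\uncertaintyelement),\theta) - B(x,\theta)$, which is only guaranteed convex in $z$ when $B$ is affine in $\theta$, as it is for the \gls{pwa} parameterisation of Section \ref{subsec:data_driven_sbf_finite_representation} actually used in the paper.
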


\begin{remark}
\label{remark:DataEfficiency}
Observe that the amount of data $N$ required to achieve a desired confidence $1-\beta$ with existing approaches based on concentration inequalities to approximate Equation \eqref{eq:c-martingale} is proportional to $1/\beta$ \cite{SALAMATI20217} whereas for our approach, the amount required is proportional to $\ln(1/\beta)$ \cite{Campi2009a}. To put this into perspective, consider $\beta = 10^{-9}$, which is the gold standard in both aviation and autonomous vehicle design \cite{Shalev-Shwartz2017}, then $1/\beta = 10^9$ while $\ln(1/\beta) \approx 20.7$.
\end{remark}

\subsection{Linear programming reformulation of stochastic barrier function design}
\label{subsec:data_driven_sbf_finite_representation}
Corollary \ref{corollary:scenario-barrier-design} defines an optimisation problem (Problem \eqref{eq:barrier_sp_opt_pro}) for the data-driven design of \glspl{sbf}. For instance, the resulting problem can be solved under the assumption that $B$ is a \gls{sos} function using semi-definite programming \cite{PJP:07, Santoyo2021}. However, while viable, this approach can often be conservative and lack of scalability \cite{Mathiesen2013}. Motivated by the \gls{pwa} structure of System \eqref{eq:system}, we propose instead to use a \gls{pwa} function to parameterise a \gls{sbf}. Then, by applying tools from robust \gls{lp}, i.e. Proposition \ref{prop:main_result_robust_LP}, we show that Problem \eqref{eq:barrier_sp_opt_pro} can be transformed into a linear program with a finite number of constraints.
To this end, let $\bar{\regionset} = \{ \bar{\region}_1, \ldots, \bar{\region}_{\bar{\numberregions}} \}$ be a polyhedral partition of the state space $X$ with $\bar{\numberregions} \geq \numberregions$. We assume that for any two regions $i, j$ where $i \neq j$ the the intersection has zero-measure \[\probmeas \{\uncertaintyelement \in \uncertaintyspace : \mathbf{x}(k) \in \bar{\region}_i \cap \bar{\region}_j \text{ for all }k = 0,\ldots, T\} = 0.\] Furthermore, assume for simplicity that each region $\bar{\region}_i$ is a subset of exactly one region $\region_{r(i)}$ from the partition $\regionset$, with a surjective function $r : \{1, \ldots, \bar{\numberregions}\} \to \{1, \ldots, \numberregions\}$ mapping between indices. In other words, the partition for the \gls{pwa} barrier candidate $\bar{\regionset}$ is aligned with the partition of the dynamics $\regionset$, although potentially more fine-grained. We consider a \gls{pwa} \gls{sbf} $B$ defined as follows
\begin{equation}\label{eq:barrier_family_def}
    B(x,\theta) = \max(B_1(x,\theta), \ldots, B_{\bar{\numberregions}}(x,\theta)),
\end{equation}
where
\[
B_i(x,\theta) = \left\{ \begin{matrix}
	\barrierl_i^\top x + \barrierc_i,& \quad \text{ for } x \in \bar{\region}_i, \\
	0, & \quad \text{ otherwise},
\end{matrix} \right.
\]
and $\theta \in \mathbb{R}^{\bar{\numberregions} (n+1)}$ is the set of parameters $(u_i,v_i) \in \mathbb{R}^{n+1}$, $i = 1, \ldots, \bar{\numberregions}$, used to define the \gls{sbf}.

For convenience, we also define collections of indices from $I = \{1,\ldots,\bar{\numberregions}\}$ that correspond to elements of the partition $\bar{\regionset}$ that have non-empty intersection with the set of safe, unsafe, and initial states, respectively:
\begin{equation}
\begin{aligned}
	I_\safeset &= \{i \in I : \bar{\region}_i \cap \safeset \neq \emptyset \}, \\ 
	I_\unsafeset &= \{i \in I : \bar{\region}_i \cap \unsafeset \neq \emptyset \}, \\ 
	I_\initialset &= \{i \in I : \bar{\region}_i \cap \initialset \neq \emptyset \}.
\end{aligned}
\label{eq:indices-barrier}
\end{equation}

With the family of barrier functions defined, we turn our attention to the reduction of Problem \eqref{eq:barrier_sp_opt_pro} into a linear problem.
In order to do that we need to reduce each of the constraints in Problem \eqref{eq:barrier_sp_opt_pro} into linear constraints.
The reduction for the non-negativity, upper bound, initial, and unsafe set constraints follow a similar structure. Hence, for brevity, we only describe the process for the non-negativity constraint.
With the assumption that the intersection of two regions has no volume, we can impose $B_i(x, \theta) \geq 0$ for all $x \in \bar{\region}_i$ independently for each region.
Note that for each region $i$ the barrier $B_i(x, \theta)$ is an affine function in $x$ over the polyhedron $\bar{\region}_i$. Hence, the resulting constraint is a robust \gls{lp} constraint and we can rely on Proposition \ref{prop:main_result_robust_LP} to transform the problem to a lifted space representable by a regular \gls{lp} constraint. 
More concretely, consider the constraint $B_i(x, \theta) = \barrierl_i^\top x + \barrierc_i \geq 0$ for all $x \in \bar{\region}_i$ where $\bar{\region}_i$ is defined by its half-space representation $(H_i, h_i) \in \mathbb{R}^{m \times n} \times \mathbb{R}^m$. Then with a dual variable $\lambda_i \in \mathbb{R}^m_{\geq 0}$, this can be replaced with the following two equivalent constraints using Proposition \ref{prop:main_result_robust_LP}: $h_i^\top \lambda_i \leq \barrierc_i$ and $H_i^\top \lambda_i = - \barrierl_i$.

Now, consider the last constraint of Problem \eqref{eq:barrier_sp_opt_pro}, namely $g(x, z, \noise(\uncertaintyelement)) + \buffervar \leq h(x, z)$ for all $\uncertaintyelement \in \sampleset$, for all $x \in \safeset$. 
For this constraint Proposition \ref{prop:main_result_robust_LP} is not immediately applicable, as we must consider the value of the barrier before and after a transition.
Instead, we construct a robust \gls{lp} constraint for each pair of regions $(i,j) \in I_\safeset \times I$:
\begin{equation}
    \begin{aligned}
        &B_j(f_{r(i)}(x) + \noise(\uncertaintyelement)) + \buffervar \leq B_i(x) + \cmartingale,\\ &\qquad\qquad \text{ for all }\uncertaintyelement \in \sampleset, \text{ for all }x \in Q_{ij}(\uncertaintyelement).
    \end{aligned}
	\label{eq:split-semi-infinite-one-step-constraints}
\end{equation}
The random subset $Q_{ij}(\uncertaintyelement)$ of $X$ is defined as
\begin{equation}
	Q_{ij}(\uncertaintyelement) = \{ x \in \bar{\region}_i :  f_{r(i)}(x) + \noise(\uncertaintyelement) \in \bar{\region}_j \},
	\label{eq:random-subset-partition}
\end{equation}
representing the set of elements in the region $\bar{\region}_i$ that are mapped to $\bar{\region}_j$ under a given realisation of the noise $\uncertaintyelement$. A pictorial example of $Q_{ij}(\uncertaintyelement)$ can be found in Figure \ref{fig:preimage_intersection}.
Since both $\bar{\region}_i$ and $\bar{\region}_j$ are polyhedra and $f_{r(i)}$ is an affine function, $Q_{ij}(\uncertaintyelement)$ is a polyhedron \cite{BV:04}. Thus, we can again use Proposition \ref{prop:main_result_robust_LP} to transform Equation \eqref{eq:split-semi-infinite-one-step-constraints} to linear constraints.
Specifically, for a pair of regions $(i, j) \in I_\safeset \times I$ and a realisation of the noise $\uncertaintyelement \in \sampleset$, with half-space representation $(H_{ij\uncertaintyelement}, h_{ij\uncertaintyelement}) \in \mathbb{R}^{m \times n} \times \mathbb{R}^m$ of region $Q_{ij}(\uncertaintyelement)$ and dual variable $\dualvariable_{ij\uncertaintyelement} \in \mathbb{R}_{\geq 0}^m$, the original semi-infinite constraint is transformed into the following two constraints 
\[
    \begin{aligned}
        & h_{ij\uncertaintyelement}^\top \dualvariable_{ij\uncertaintyelement} \leq \barrierc_i - \barrierc_j - \barrierl_j^\top (b_{r(i)} + \noise(\uncertaintyelement)) + \cmartingale - \buffervar, \\[1ex]
        & H^\top_{ij\uncertaintyelement} \dualvariable_{ij\uncertaintyelement} = A_{r(i)}^\top \barrierl_j - \barrierl_i.
    \end{aligned}
\]

Collecting together all finite sets of constraints, the
LP equivalent representation of Program \eqref{eq:barrier_sp_opt_pro} is as follows.
\begin{equation}
\begin{aligned}
    \min_{\gamma \geq 0, \, \cmartingale \geq 0,\, \theta} & \quad \gamma + \cmartingale T\\
    \subjectto & \quad \begin{alignedat}{3}
        & h_i^\top \dualvariable_{i} \leq \barrierc_i, \;  H_i^\top \dualvariable_i = -\barrierl_i,  \\[1ex]
        & h_{i}^\top \dualvariable_{iM} \leq M - \barrierc_i, \; H_{i}^\top \dualvariable_{iM} = \barrierl_i, \text{ for all } i \in I, \\[1ex]
        & h_{i0}^\top \dualvariable_{i0}\leq \gamma - \barrierc_i, \;H^\top_{i0} \dualvariable_{i0} = \barrierl_{i0}, \text{ for all } i \in I_\initialset,  \\[1ex]
        & h_{i}^\top \dualvariable_{i\unsafeset}\leq \barrierc_i - 1, \; H^\top_{i} \dualvariable_{i\unsafeset} = -\barrierl_{i}, \text{ for all } i \in I_\unsafeset, \\[1ex]
        & h_{ij\uncertaintyelement}^\top \dualvariable_{ij\uncertaintyelement} \leq \barrierc_i - \barrierc_j - \barrierl_j^\top (b_{r(i)} + \noise(\uncertaintyelement)) + c - \buffervar, \\[1ex]
        & H^\top_{ij\uncertaintyelement} \dualvariable_{ij\uncertaintyelement} = A_{r(i)}^\top \barrierl_j - \barrierl_i, \text{ for all } \uncertaintyelement \in \sampleset, \\
        & \quad \text{ for all } (i, j) \in I_\safeset \times I,
    \end{alignedat}
\end{aligned}
\tag{LBP}
\label{eq:barrier_sp_opt_pro_finite}
\end{equation}
where $\dualvariable_{i}, \dualvariable_{iM}, \dualvariable_{i0}, \dualvariable_{i\unsafeset}, \dualvariable_{ij\uncertaintyelement}$ are non-negative dual variables.
$(H_{i0}, h_{i0})$ denotes the half-space representation of $\bar{\region}_i \cap \initialset$.
\begin{theorem}\label{thm:finite-scenario-barrier-design}
    Let $B$ be a \acrlong{pwa} \acrlong{sbf} as defined in Equation \eqref{eq:barrier_family_def}. Then, an optimal solution $z^\star(\sampleset)$ to Problem \eqref{eq:barrier_sp_opt_pro_finite} is an optimal solution to Problem \eqref{eq:barrier_sp_opt_pro}.
\end{theorem}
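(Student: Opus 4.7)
The plan is to show that the two programs have identical feasible sets (for the PWA parameterisation of Equation \eqref{eq:barrier_family_def}), which immediately yields the same optimal solutions. My proof will decompose each semi-infinite constraint of Problem \eqref{eq:barrier_sp_opt_pro} into affine-over-a-polyhedron constraints, to which I can then apply the robust LP duality result of Proposition \ref{prop:main_result_robust_LP}.

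First, I would exploit the piece-wise structure. Since $\bar{\regionset}$ is a partition of $X$ with zero-measure pairwise intersections, a constraint ``for all $x \in \mathcal{A}$'' is equivalent to the collection of constraints ``for all $x \in \bar{\region}_i \cap \mathcal{A}$'', one per $i \in I$. On $\bar{\region}_i$ the only nonzero summand in the definition of $B$ is $B_i$; moreover, combined with the non-negativity constraint $B_i(x,\theta)\geq 0$ on $\bar{\region}_i$, the outer $\max$ collapses to $B(x,\theta)=B_i(x,\theta)$, an affine function of $x$. Thus the non-negativity, upper bound ($B\leq M$), initial set ($B\leq\gamma$) and unsafe set ($B\geq 1$) constraints each reduce to a family of affine inequalities of the form $a(z)^\top x \leq b(z)$ parameterised by $z=(c,\gamma,\theta)$ and ranging over a polyhedron (either $\bar{\region}_i$ itself or an intersection such as $\bar{\region}_i\cap\initialset$, which is again a polyhedron). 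Applying Proposition \ref{prop:main_result_robust_LP} to each such constraint introduces a non-negative dual variable and yields exactly the first four blocks of constraints in Problem \eqref{eq:barrier_sp_opt_pro_finite}.

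The main technical step is handling the chance-constrained scenario inequality $g(x,z,\noise(\uncertaintyelement))+\buffervar \leq h(x,z)$ for $x\in\safeset$ and $\uncertaintyelement\in\sampleset$, because the pre-state and post-state can lie in different regions of the partition. For each fixed $\uncertaintyelement\in\sampleset$, I further split the quantification over $x\in\safeset$ according to the pair $(i,j)\in I_\safeset\times I$ such that $x\in\bar{\region}_i$ and $f_{r(i)}(x)+\noise(\uncertaintyelement)\in\bar{\region}_j$; this set is exactly $Q_{ij}(\uncertaintyelement)$ as defined in Equation \eqref{eq:random-subset-partition}. Because $f_{r(i)}$ is affine on $\bar{\region}_i$ and $\bar{\region}_j$ is a polyhedron, $Q_{ij}(\uncertaintyelement)$ is a polyhedron by closure of polyhedra under affine pre-images and intersections. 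On $Q_{ij}(\uncertaintyelement)$, the same simplification as before gives $h(x,z)=B_i(x,\theta)+\cmartingale$ and $g(x,z,\noise(\uncertaintyelement))=B_j(f_{r(i)}(x)+\noise(\uncertaintyelement),\theta)$, so the constraint becomes affine in $x$ with coefficients affine in $z$. Applying Proposition \ref{prop:main_result_robust_LP} once more, with dual variable $\dualvariable_{ij\uncertaintyelement}$ and the half-space representation $(H_{ij\uncertaintyelement}, h_{ij\uncertaintyelement})$ of $Q_{ij}(\uncertaintyelement)$, yields the last block of Problem \eqref{eq:barrier_sp_opt_pro_finite}.

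Finally, I would observe that the cost $\gamma+\cmartingale T$ and the sign constraints on $\gamma,\cmartingale$ are unchanged, so the projection of the feasible set of Problem \eqref{eq:barrier_sp_opt_pro_finite} onto the $(\gamma,\cmartingale,\theta)$-coordinates coincides exactly with the feasible set of Problem \eqref{eq:barrier_sp_opt_pro}; this is the content of the equality $\mathcal{Z}=\mathcal{Z}'$ in Proposition \ref{prop:main_result_robust_LP}. Any optimiser $z^\star(\sampleset)$ of the LP therefore descends to an optimiser of the scenario program with the same objective value, concluding the proof. The step I expect to be most delicate is justifying the pairwise decomposition of the martingale constraint (including why it suffices to quantify over $(i,j)\in I_\safeset\times I$ rather than also over excluded regions, and why the zero-measure intersection assumption lets us ignore boundary overlaps without altering the scenario feasible set); the remaining blocks are mechanical applications of robust LP duality.
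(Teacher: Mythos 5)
Your proposal is correct and follows essentially the same route as the paper's own derivation in Section \ref{subsec:data_driven_sbf_finite_representation}: a region-wise decomposition of each semi-infinite constraint (with the outer $\max$ collapsing to the affine piece $B_i$ on $\bar{\region}_i$ thanks to the non-negativity constraint and the zero-measure overlap assumption), the pairwise split of the martingale constraint over the polyhedra $Q_{ij}(\uncertaintyelement)$ of Equation \eqref{eq:random-subset-partition}, and an application of Proposition \ref{prop:main_result_robust_LP} to each resulting affine-over-polyhedron constraint. No substantive differences from the paper's argument.
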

By Corollary \ref{corollary:scenario-barrier-design} and Theorem \ref{thm:finite-scenario-barrier-design}, Problem \eqref{eq:barrier_sp_opt_pro_finite} is an equivalent \gls{lp} representation of Problem \eqref{eq:main_opt_pro} that can be employed to synthesize a SBF.
The number of decision variables and constraints of the resulting LP depends on the number of half-spaces necessary to represent each polyhedron. In particular, assume for simplicity that 
each polyhedral region is represented by $m$ half-spaces. Then, the number of decision variables in Problem \eqref{eq:barrier_sp_opt_pro_finite} is
\[
    \underbrace{2}_{\gamma, \cmartingale} + \underbrace{(n + 1) \cdot \bar{\numberregions}}_{\theta} + \underbrace{m\cdot (2 \bar{\numberregions} + \lvert I_\initialset \rvert + \lvert I_\unsafeset \rvert + N \lvert I_\safeset \rvert \bar{\numberregions})}_{\text{dual variables}},
\]
while the number of constraints is:
\[
    2 + m \cdot (6 \bar{\numberregions} + 3 \lvert I_\initialset \rvert + 3 \lvert I_\unsafeset \rvert + 3 N \lvert I_\safeset \rvert \bar{\numberregions}).
\]
Note that both the number of constraints and number of variables are dominated by the term $m N \lvert I_\safeset \rvert \bar{\numberregions}$, where $ \lvert I_\safeset \rvert $ and $\bar{\numberregions}$ are respectively number of pieces in the \gls{sbf} that intersect with $\safeset$ and total number of pieces in the \gls{sbf}. This illustrates how the dimension of the resulting \gls{lp} problem grows linearly in the number of samples $N$ and quadratically in the complexity (i.e., number of pieces) of the barrier $B$.

\begin{figure}
    \centering
    \includegraphics[width=0.44\textwidth]{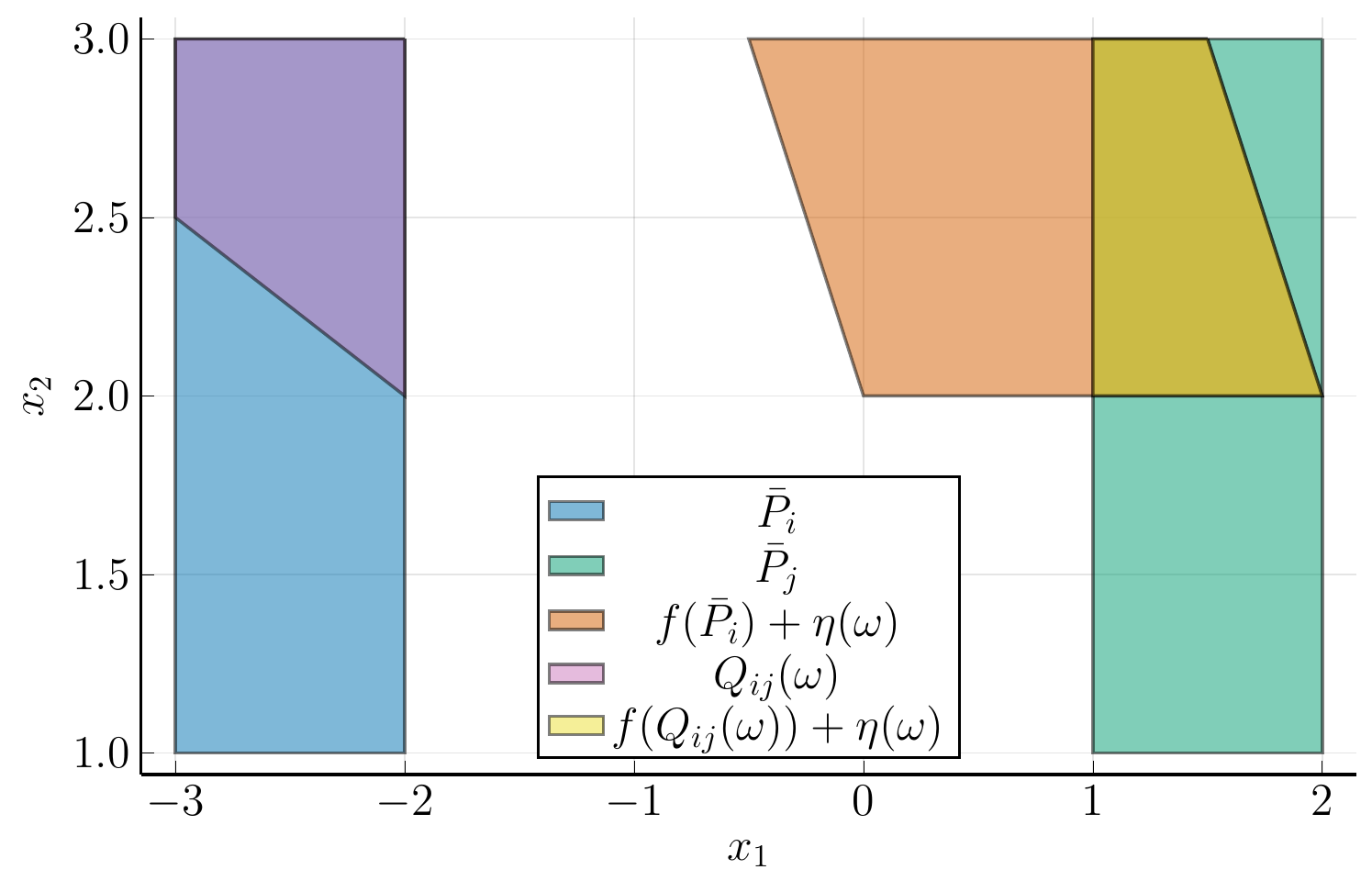}
    \caption{Given two regions $\bar{\region}_i, \bar{\region}_j$ and a realisation of the noise $\uncertaintyelement$, the set $Q_{ij}(\uncertaintyelement)$ represents the subset of $x\in \bar{\region}_i$ such that $f(x) + \noise(\uncertaintyelement) \in \bar{\region}_j$. In other words, $Q_{ij}(\uncertaintyelement)$ is the subset of $\bar{\region}_i$ that can reach $\bar{\region}_j$ given the realisation of the noise $\uncertaintyelement$.}
    \label{fig:preimage_intersection}
\end{figure}

\section{Experiments}
\label{sec:experiments}

To show the efficacy of the proposed method, we evaluate it on three different benchmarks. Namely:
\begin{itemize}
    \item a 1D linear system governed by the following dynamics $\mathbf{x}(k + 1) = \mathbf{x}(k) + \noise(k)$, which is a martingale,
    \item a 2D linear model of longitudinal dynamics for a drone from \cite{Badings2022},
    \item a 2D \gls{pwa} model of a vehicle driving with constant velocity subject to a wind disturbance along its path. 
\end{itemize}
For the martingale system, the goal is to quantify the probability that from any state within a radius of 0.5 around the origin the system will stay within a set of radius of 2.5 around the origin for a time horizon $T = 10$. For the drone, the goal is to certify that the speed of the drone always stays lower than $10$ units, again for a time horizon $T=10$. Please note that in \cite{Badings2022}, they consider an uncertain mass of the drone, which is not compatible with Problem \eqref{eq:barrier_sp_opt_pro_finite}. To make the benchmark compatible, we let the mass be equal to the center of the uncertainty interval, namely $m = 1$.
Finally, the last model represents a vehicle driving with constant velocity. The goal is to stay on the road within $T=10$, despite a varying disturbance from wind along the route.
Mathematically, we describe the dynamics as follows:
\[
    \mathbf{x}(k + 1) = \begin{bmatrix}
        1 & 0 \\ 0 & 0.95\tau
    \end{bmatrix}\mathbf{x}(k) + \begin{bmatrix}
        v\tau \\ 0.5 d \tau^2
    \end{bmatrix} + \eta(k)
\]
where we choose a velocity $v = 13.89$, a time resolution $\tau = 1$, and a disturbance $d = 0.0626$ for regions where the longitudinal position $x_1$ satisfies $80 \leq x_1 \leq 120$ and $d = 0$ otherwise. For the purpose of the experiment, we assume $\eta(k)$ is Gaussian noise with diagonal covariance, which of course is assumed unknown and only \gls{iid} samples can be generated from it.

We compare our method against \gls{saa} \cite{SALAMATI20217}, arguably the state-of-the-art for data-driven synthesis of \glspl{sbf}, on the three benchmarks. For \gls{saa}, we employ a 4th degree polynomial barrier and \gls{sos} optimisation.
For our method, we consider a \gls{pwa} barrier function with $7$ and $33$ pieces for respectively the Martingale and Drone example, while for the Vehicle example we consider different values of $\bar{\numberregions}$ to study its impact.
The benchmarks and methods have been implemented\footnote{Code is available at \url{https://github.com/DAI-Lab-HERALD/scenario-barrier} under a GNU GPLv3 license.} in Julia (1.8.3) with JuMP.jl (1.6.0) as the modelling framework and Mosek (9.3.11) as the \gls{lp} solver. 
The experiments are conducted on a computer running Linux Manjaro (5.10.157) with an Intel Core i7-10610U CPU and 16GB RAM.

Table \ref{tab:results_table} shows the results across all three systems. The results are reported as the average over 100 trials to ensure that certification is not spurious due to a sampling of the noise.
Comparing the two methods in Table \ref{tab:results_table}, we see that the proposed method outperforms \gls{saa} across all measures on both the Martingale and Drone system, while the vehicle is intractable for \gls{saa}.  
Note that for any system considered in this paper, \gls{saa} can only certify with a confidence $1 - 10^{-6}$ and probability of safety up to $0.95$, due to an intractable amount of samples required for higher confidence and smaller auxiliary variable $\nu$. On the other hand, our method, thanks to the bounds we compute in Corollary \ref{corollary:scenario-barrier-design}, achieves a confidence of $1 - 10^{-9}$ (see Remark \ref{remark:DataEfficiency}), and a probability of safety up to $0.995$. In addition, our method achieves a higher certified probability of safety and is orders of magnitude faster. The latter is due the reduction to LP and to the use of the scenario approach to derive confidence bounds.
To further highlight the data-efficiency, we present in Figure \ref{fig:number_of_samples_required} the number of samples required to achieve a desired confidence for both methods.
The figure clearly shows that our method requires orders of magnitude less samples to achieve the same confidence.

Next, we analyze the impact of increasing the number of pieces in the \gls{sbf} $\bar{l}$, towards a more expressive \gls{sbf}. Table \ref{tab:results_table} reveals that increasing the number of partitions for the barrier (see Equation \ref{eq:barrier_family_def}) yields tighter guarantees as expected. In fact, a \gls{pwa} function with arbitrarily many pieces can approximate arbitrarily well any continuous function, thus increasing the flexibility of the framework.
 However, this comes at the cost of increased computation time. Note however, that computation times are always faster than \gls{saa} even for relatively large $\bar{\numberregions}$.  We also observe that despite using fewer regions for the Drone system, it is slower to compute than for the Vehicle system with both 42 and 46 regions. To understand why note that the constraint in Equation \eqref{eq:split-semi-infinite-one-step-constraints} is trivially satisfied if $Q_{ij}(\uncertaintyelement)$ is empty, or in other words, it is impossible to reach region $j$ from region $i$ under the realisation of the noise $\uncertaintyelement$. The Drone system has more non-empty $Q_{ij}(\uncertaintyelement)$ over the Vehicle system and thus is slower.


 \aboverulesep=0.3ex
 \belowrulesep=0.3ex
 \setlength{\tabcolsep}{4pt}
 \renewcommand{\arraystretch}{1.1}
\begin{table}
    \centering
    \caption{Certified safety and computation time using the method explained in Section \ref{sec:data_driven_sbf_design}. Results are reported as the average over 100 trials. $n$ is the dimensionality of the system and $\bar{\numberregions}$ is the number of pieces of the \gls{pwa} \gls{sbf} $B$. $ 1 - \beta$ is the confidence in the certificate and $\zeta(\safeset, T)$ is the certified level of safety. Bold font denotes best method for each measure and system.}
    \label{tab:results_table}
    \vspace{0.2em}
    {\footnotesize
    \begin{tabular}{l|clc|ccc}
    \toprule
       System  & $n$ & Method & $\bar{\numberregions}$ & $\beta$ & $\zeta(\safeset, T)$ & Comp. time (s)\\\midrule\midrule
         \multirow{2}{*}{Martingale} & \multirow{2}{*}{1} & Our & 7 & $\mathbf{10^{-9}}$ & $0.769$ & $\mathbf{0.096}$ \\ 
         && \gls{saa} & - & $10^{-6}$ & $\mathbf{0.910}$ & $0.249$ \\\midrule
         \multirow{2}{*}{Drone} & \multirow{2}{*}{2} & Our & 33 & $\mathbf{10^{-9}}$ & $\mathbf{0.995}$ & $4.84$ \\
         && \gls{saa} & - & $10^{-6}$ & $0.950$ & $\mathbf{1.18}$ \\\midrule
         \multirow{5}{*}{Vehicle} & \multirow{5}{*}{2} & Our & 18 & $\mathbf{10^{-9}}$ & $0.618$ & $\mathbf{1.44}$ \\
         && Our & 42 & $\mathbf{10^{-9}}$ & $0.712$ & $2.45$ \\
         && Our & 46 & $\mathbf{10^{-9}}$ & $0.842$ & $3.72$ \\
         && Our & 126 & $\mathbf{10^{-9}}$ & $\mathbf{0.994}$ & $9.06$ \\
         && \gls{saa} & - & $10^{-6}$ & $0.000$ & $2.14$ \\
         \bottomrule
    \end{tabular}}
\end{table}

\begin{figure}
    \centering
    \includegraphics[width=0.48\textwidth]{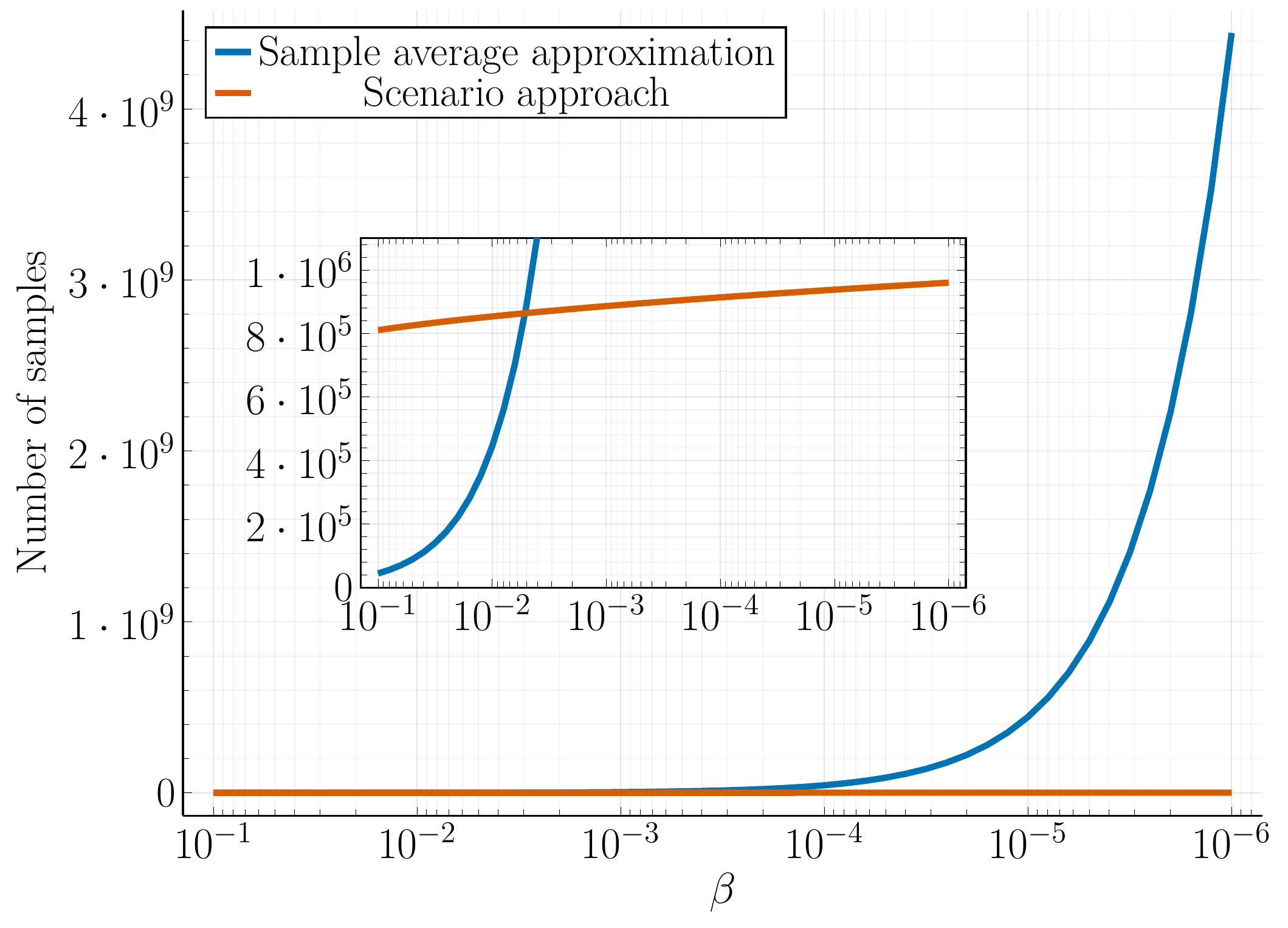}
    \caption{A plot for the number of samples required to achieve a given confidence $1-\beta$ for \gls{saa} and the proposed method using the scenario approach. The number of samples reported in this plot is specifically for the vehicle system with 126 regions, as reported in Table \ref{tab:results_table}.}
    \label{fig:number_of_samples_required}
\end{figure}

\section{Conclusions}
\label{sec:conclusion}

We studied the problem of certifying probabilistic safety for partially known stochastic systems.
The problem is important for the adoption of autonomous safety-critical systems.
This safety verification problem was addressed by synthesising \acrfull{sbf} with a data-driven approach leveraging the scenario optimisation theory.
To apply the data-driven scenario approach to \gls{sbf} synthesis, a novel inner chance-constrained approximation to stochastic programming was presented.
The chance-constrained approximation was applied to \glspl{sbf} in Theorem \ref{thm:barrier-ccp}: an important consequence of the theorem is that the method can be easily extended to other classes of systems, e.g. polynomial or more general non-linear systems.
Experimental studies showed that our method can certify systems with a confidence that is orders of magnitude greater than the state-of-the-art methods, while also producing tighter bounds and being faster.

\section{Technical proofs}
\label{sec:technical_proofs}

\subsection{Proof for Theorem \ref{thm:barrier-ccp}}
In order to prove Theorem \ref{thm:barrier-ccp} we consider the following stochastic program, which generalizes Problem \eqref{eq:main_opt_pro},
\begin{equation}
        \begin{aligned}
        \min_{z} & \quad s^\top z\\
        \subjectto & \quad \mathbb{E} \left\{ g(x, z, \noise(\uncertaintyelement)) \right\} \leq h(x, z), \quad \text{ for all } x \in \safeset, 
        \end{aligned}
    \label{eq:stochastic_opt_pro}
\end{equation}
where $z \in \mathbb{R}^d$ is the decision variable, $s \in \mathbb{R}^d$ is the cost vector, $\noise : \uncertaintyspace \to \mathbb{R}^m$ is a random variable on $(\uncertaintyspace, \sigmaalgebra, \probmeas)$, and $g: \mathbb{R}^n \times \mathbb{R}^d \times \mathbb{R}^m \to \mathbb{R}$ is a measurable and integrable function for each pair $(x, z) \in \mathbb{R}^n \times \mathbb{R}^d$, $h:\mathbb{R}^n \times \mathbb{R}^d \to \mathbb{R}_{\geq 0}$ is a function, and $\safeset$ is a measureable set on $\mathbb{R}^n$.
The feasible set of Problem \eqref{eq:stochastic_opt_pro} is given by
\[
\mathcal{Z} = \{ z \in \mathbb{R}^d : \mathbb{E}\{g(x, z, \noise(\uncertaintyelement))\} \leq h(x, z) \text{ for all } x\in \safeset \}.
\]

Theorem \ref{thm:chance_constrained_inner_approximation} shows that an inner approximation of the feasible set $\mathcal{Z}$ for Problem \eqref{eq:stochastic_opt_pro} can be obtained through a chance-constrained problem. Thus, we can relax Problem \eqref{eq:stochastic_opt_pro} to the following chance-constrained problem.
\begin{theorem}[Inner chance-constrained approximation]\label{thm:chance_constrained_inner_approximation}
    Let $\epsilon \in (0, 1)$ be a given threshold and assume $h(x, z) \geq 0$ for all $(x, z) \in \mathbb{R}^n \times \mathbb{R}^d$. Define a uniform upper bound $M = \sup\limits_{x, z, \omega} g(x, z, \noise(\uncertaintyelement)) > 0$ on $g$ and let $\buffervar \geq \frac{\epsilon M}{1-\epsilon}$.
    Define the set \[
    E(x, z) = \left\{ \uncertaintyelement \in \uncertaintyspace : g(x, z, \noise(\uncertaintyelement)) + \buffervar \leq h(x, z)\right\},
    \]
    and consider the chance-constrained problem
        \begin{equation}
            \begin{aligned}
                \min_{z} & \quad s^\top z\\
                \subjectto & \quad \probmeas \left\{ E(x, z) \right\} \geq 1 - \epsilon, \text{ for all } x \in \safeset,
            \end{aligned}
            \label{eq:chance_constrainted_opt_pro}
        \end{equation}
   whose feasible set is given by $\mathcal{Z}' = \big\{z \in \mathbb{R}^d : \probmeas \left\{ E(x, z) \big\} \geq 1-\epsilon, \text{ for all } x \in \safeset \right\}$. Then we have that $\mathcal{Z}' \subseteq \mathcal{Z}$.
\end{theorem}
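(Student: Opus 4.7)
The plan is to prove the set inclusion $\mathcal{Z}' \subseteq \mathcal{Z}$ directly: fix $z \in \mathcal{Z}'$ and an arbitrary $x \in \safeset$, then verify $\mathbb{E}\{g(x,z,\noise(\uncertaintyelement))\} \leq h(x,z)$. The natural idea is to partition the probability space into the ``good'' event $E(x,z)$, on which the tightened inequality $g + \buffervar \leq h$ holds by definition, and its complement, on which we fall back to the uniform upper bound $g \leq M$. The buffer $\buffervar$ is calibrated so that the slack gained on $E(x,z)$ dominates the worst-case contribution from its complement.

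Concretely, write $p = \probmeas(E(x,z))$ and note $p \geq 1-\epsilon$ by the assumption $z \in \mathcal{Z}'$. Splitting the expectation yields
\begin{equation*}
    \mathbb{E}\{g(x,z,\noise(\uncertaintyelement))\} = \int_{E(x,z)} g \, d\probmeas + \int_{E(x,z)^c} g \, d\probmeas \leq (h(x,z) - \buffervar)\, p + M\, (1-p),
\end{equation*}
using the definition of $E(x,z)$ on the first integral and the uniform bound $M$ on the second. The target inequality $\mathbb{E}\{g\} \leq h(x,z)$ will then follow once one verifies the equivalent rearrangement $(M - h(x,z))(1-p) \leq \buffervar\, p$. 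Since $h(x,z) \geq 0$, the left-hand side is at most $M(1-p) \leq M\epsilon$ (because $p \geq 1-\epsilon$). On the other hand, by the hypothesis $\buffervar \geq \epsilon M/(1-\epsilon)$ together with $p \geq 1-\epsilon$, the right-hand side satisfies $\buffervar\, p \geq \buffervar (1-\epsilon) \geq \epsilon M$, which chains with the previous bound to close the argument.

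The proof is short, so the main point is conceptual rather than technical: identifying where the looseness enters. A naive uniform bound $g \leq M$ on all of $\uncertaintyspace$ would fail, and it is the two-regime split combined with the precise calibration $\buffervar \geq \epsilon M/(1-\epsilon)$ that makes the inequality balance exactly in the worst case $p = 1-\epsilon$, $h(x,z) = 0$. Integrability of $g$ (assumed in the setting of Problem \eqref{eq:stochastic_opt_pro}) legitimises the expectation split, and measurability of $E(x,z)$ follows from measurability of $g$ in $\uncertaintyelement$, so no genuine measure-theoretic obstacle arises.
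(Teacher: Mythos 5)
Your proof is correct and follows essentially the same route as the paper's: split the expectation over $E(x,z)$ and its complement, bound $g$ by $h(x,z)-\buffervar$ on the former and by $M$ on the latter, then use $h(x,z)\geq 0$, $\probmeas(E(x,z))\geq 1-\epsilon$, and the calibration $\buffervar\geq \epsilon M/(1-\epsilon)$ to close the inequality. The only difference is cosmetic bookkeeping — you rearrange to $(M-h(x,z))(1-p)\leq \buffervar\,p$ and bound each side, whereas the paper absorbs $h(x,z)p\leq h(x,z)$ first and then checks $-\buffervar\,p+M(1-p)\leq 0$; both are the same argument.
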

\begin{proof}
    Pick any $\bar{z} \in \mathcal{Z}'$. Our goal is to show that $\bar{z} \in \mathcal{Z}$. To this end, pick any $x\in \safeset$ and notice that 
    \[
        \begin{aligned}
        \mathbb{E}\left[g(x, \bar{z}, \noise(\uncertaintyelement)) \right] = &\int_{E(x,\bar{z})} g(x,\bar{z}, \noise(\uncertaintyelement)) \,d\probmeas(\uncertaintyelement) + \\
        &\int_{E(x,\bar{z})^c} g(x, \bar{z}, \noise(\uncertaintyelement)) \,d\probmeas(\uncertaintyelement).
        \end{aligned}
    \]
    Hence, we can derive the following
    \begin{equation}
        \begin{aligned}
        &\mathbb{E} \left\{ g(x, \bar{z}, \noise(\uncertaintyelement)) \right\}\\
        &\quad\; \leq (h(x, \bar{z}) - \buffervar)\mathbb{P}\{E(x,\bar{z})\} + M \mathbb{P}\{E(x,\bar{z})^c\} \\
        &\quad\; = h(x, \bar{z}) - \buffervar\mathbb{P}\{E(x,\bar{z})\} + M\mathbb{P}\{E(x,\bar{z})^c\}
        \end{aligned}
        \label{eq:proof_approx_SP_with_CCP}
    \end{equation}
    where the first inequality follows from the fact that $g(x, \bar{z},\noise(\uncertaintyelement))$ is less than or equal to $h(x, \bar{z}) - \buffervar$ on the set $E(x,\bar{z})$ and that $g$ is uniformly upper bounded by $M$ on the whole space $\uncertaintyspace$. 
    and the second inequality follows from $h(x, \bar{z}) \geq 0$ for all $(x, \bar{z}) \in \mathbb{R}^n \times \mathbb{R}^d$ and $\probmeas\{E(x, \bar{z})\} \in [0, 1]$.

    Now, by transitivity it holds that $\mathbb{E} \left\{ g(x, \bar{z}, \noise(\uncertaintyelement)) \right\} \leq h(x, \bar{z})$ if $h(x, \bar{z}) - \buffervar\mathbb{P}\{E(x,\bar{z})\} + M\mathbb{P}\{E(x,\bar{z})^c\} \leq h(x, \bar{z})$. Due to the feasibility of $\bar{z}$, it holds that $\mathbb{P}\{E(x,\bar{z})\} \geq 1 - \epsilon$ and $\mathbb{P}\{E(x,\bar{z})^c\} \leq \epsilon$. Furthermore, observe that $\buffervar \geq 0$ and $M > 0$, hence the second inequality holds if $- \buffervar\mathbb{P}\{E(x,\bar{z})\} + M\mathbb{P}\{E(x,\bar{z})^c\} \leq 0$. Restructuring this inequality, we arrive at $\buffervar \geq \frac{\epsilon M}{1-\epsilon}$, which is assumed to hold (by carefully choosing $\buffervar$).
    Therefore, we observe that $\bar{z} \in \mathcal{Z}$, thus concluding the proof of the theorem.
\end{proof}
What is left to show to conclude the proof is to show that $g(x, z, \noise(\uncertaintyelement)) = B(f(x) + \noise(\uncertaintyelement),\theta)$ and $h(x, z) = B(x) + \cmartingale$
satisfies the conditions of Theorem \ref{thm:chance_constrained_inner_approximation}, i.e., $h$ is non-negative and $g$ is uniformly bounded by $M$. Non-negative follows trivially by the definition of a \gls{sbf}, while boundedness of $B$ and consequently of $g$, can always be enforced.

\printbibliography

\end{document}